\DeclareMathOperator{\tr}{tr}
\DeclareMathOperator{\Tr}{Tr}
\newcommand{\bra}[1]{\left\langle #1 \right|}
\newcommand{\ket}[1]{\left| #1 \right\rangle}
\newcommand{\ketbra}[2]{\left|#1\middle\rangle\middle\langle#2\right|}
\newtheorem{lemma}{Lemma}
\newtheorem{definition}{Definition}
\newcommand{\ba}{\begin{eqnarray}}
\newcommand{\ea}{\end{eqnarray}}
\newcommand{\ban}{\begin{eqnarray*}}
\newcommand{\ean}{\end{eqnarray*}}
\begin{document}
\title{Algorithmic construction of local hidden variable models for entangled quantum states}

\author{Flavien Hirsch}
\affiliation{D\'epartement de Physique Th\'eorique, Universit\'e de Gen\`eve, 1211 Gen\`eve, Switzerland}

\author{Marco T\'ulio Quintino}
\affiliation{D\'epartement de Physique Th\'eorique, Universit\'e de Gen\`eve, 1211 Gen\`eve, Switzerland}

\author{Tam\'as V\'ertesi}
\affiliation{Institute for Nuclear Research, Hungarian Academy of Sciences,
H-4001 Debrecen, P.O. Box 51, Hungary}

\author{Matthew F. Pusey}
\affiliation{Perimeter Institute for Theoretical Physics, 31 Caroline Street North, Waterloo, ON N2L 2Y5, Canada}

\author{Nicolas Brunner}
\affiliation{D\'epartement de Physique Th\'eorique, Universit\'e de Gen\`eve, 1211 Gen\`eve, Switzerland}

\date{\today}  

\begin{abstract}
Constructing local hidden variable (LHV) models for entangled quantum states is challenging, as the model should reproduce quantum predictions for all possible local measurements. Here we present a simple method for building LHV models, applicable to general entangled states and considering continuous sets of measurements. 
This leads to a sequence of tests which, in the limit, fully capture the set of quantum states admitting a LHV model. Similar methods are developed for constructing local hidden state models. We illustrate the practical relevance of these methods with several examples, and discuss further applications.
\end{abstract}

\maketitle

Distant observers performing well-chosen local measurements on a shared entangled state can establish nonlocal correlations, as witnessed by the violation of a Bell inequality \cite{bell64,review}. Quantum nonlocality is among the most counter-intuitive features of quantum physics, and a key resource in quantum information processing \cite{DI,colbeck,pironio}.

Initially believed to be two different facets of the same phenomenon, entanglement and nonlocality are now recognized as fundamentally different. Notably, there exist entangled states which cannot give rise to nonlocality considering arbitrary (non-sequential) measurements. The correlations of such states---thus referred to as `local' entangled states---can be perfectly reproduced using a local hidden variable (LHV) model, i.e. using only shared classical resources. This was first demonstrated by Werner \cite{werner89}, who presented a class of entangled states which admit a LHV model for arbitrary projective measurements. This was later extended to more general POVMs \cite{barrett02}, and other classes of states \cite{almeida07,acin06,hirsch13,augusiak_review}. In particular, several works \cite{wiseman07,bowles14,sania14,quintino15} constructed local hidden state models (LHS), a special class of LHV model in which one party's hidden variable can be understood as a quantum state \cite{wiseman07}.

Constructing an LHV (or LHS) model for an entangled state is a challenging problem, since the model should reproduce the quantum statistics for a \emph{continuous} set of measurements, for instance all projective measurements. LHV (or LHS) models could be constructed for entangled states featuring a certain degree of symmetry \cite{augusiak_review}. Recently, a sufficient condition for a two-qubit state to admit a LHS was discussed \cite{joe15}. However, for general states, essentially nothing is known, due to the lack of appropriate techniques for discussing the problem.

Here we present a simple and efficient method for constructing LHV and LHS models, applicable to arbitrary entangled states and considering continuous sets of measurements. The main idea is to map the problem of finding a local model for an entangled state (a seemingly infinite problem) to a finite (hence tractable) problem, namely to find out whether the correlations resulting from a finite set of measurements on a different entangled state admit a local decomposition. We can define a sequence of tests for determining whether a given entangled state admits a LHV (or LHS) model, which is shown to converge in the limit, and thus to give a full characterization of the set of local entangled states (see Fig.1). The method can be efficiently implemented, and we construct LHV and LHS models for different classes of entangled states. In particular, we present LHS models for a non full-rank entangled state, and for a bound entangled state. We conclude by discussing further possible applications.

\begin{figure}[b!] \begin{center}
\includegraphics[width=0.85\columnwidth]{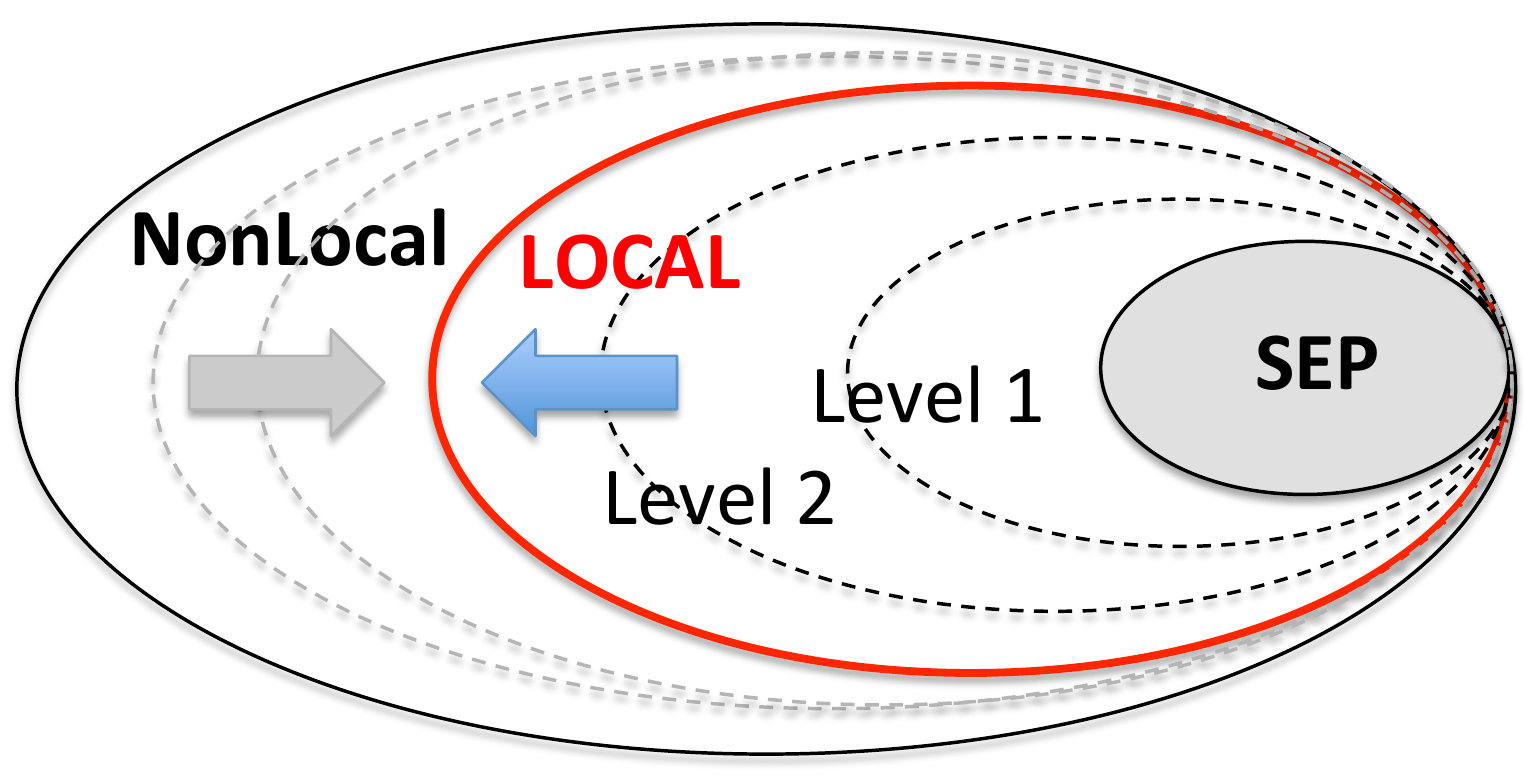}
\caption{A method for constructing LHV models for entangled states is discussed. This leads to a sequence of tests, which provide in each level a better approximation of the set of local states (red), a strict superset of the set of separable states (grey region). This is complementary to standard methods, based e.g. on Bell inequalities, which provide an approximation of the set of local states from outside.}
\end{center}
\end{figure}

\textit{Preliminaries} Consider Alice and Bob sharing an entangled quantum state $\rho$. Alice performs a set of measurements $\{M_{a|x} \}$ ($M_{a|x}\geq 0$ and $\sum_a M_{a|x} = \openone$), and Bob performs measurements $\{M_{b|y}\}$. The resulting statistics is given by
\begin{align} \label{pQ}
p(ab|xy) = \Tr ( M_{a|x} \otimes M_{b|y} \; \rho ).
\end{align}
The state $\rho$ is said to be local (for $\{M_{a|x} \}$ and $\{M_{b|y}\}$) if distribution \eqref{pQ} admits a Bell local decomposition:
\begin{align}
\label{LHV} p(ab|xy) &= \int \pi(\lambda) \; p_A (a|x,\lambda ) \; p_B (b|y,\lambda) \; d\lambda .
\end{align}
That is, the quantum statistics can be reproduced using a LHV model consisting of a shared local (hidden) variable $\lambda$, distributed with density $\pi(\lambda)$, and local response functions given by distributions $p_A (a|x,\lambda )$ and $p_B (b|y,\lambda)$. If a decomposition of the form \eqref{LHV} cannot be found, the distribution $p(ab|xy)$ violates (at least) one Bell inequality \cite{review}. In this case, we conclude that $\rho$ is nonlocal for the sets $\{M_{a|x} \}$ and $\{M_{b|y}\}$.

Another concept of interest is that of a LHS model, associated to quantum steering \cite{wiseman07}. Specifically, we say that $\rho$ is 'unsteerable' (from Alice to Bob) if
\begin{align}
\label{LHS} p(ab|xy) &= \int \pi(\lambda) \; p_A (a|x,\lambda ) \; \Tr ( M_{b|y} \sigma_\lambda) \; d\lambda .
\end{align}
That is, the quantum statistics can be reproduced by a LHS model, where $\sigma_\lambda$ denotes the local (hidden) quantum state and $p_A (a|x,\lambda )$ is Alice's response function. If such a decomposition cannot be found, $\rho$ is said to be 'steerable' for the set $\{M_{a|x}\}$; note that one would usually consider here a set of measurements $M_{b|y}$ that is tomographically complete, and thus focus the analysis on the set of conditional states of Bob's system
\ba
\sigma_{a|x} = \Tr_A (M_{a|x} \otimes \openone \; \rho),
\ea
referred to as an assemblage. Note also that any LHS model can be considered as an LHV model. The converse does not necessarily hold, as there exist entangled states which are steerable but nevertheless Bell local \cite{wiseman07,quintino15}.

The problem of testing the locality or unsteerability of a given entangled state $\rho$ for finite sets of measurements can be solved using existing methods, such as symmetric extensions for quantum states \cite{terhal}, linear and semi-definite programs (SDP)\cite{review,pusey13,skrzypczyk13}, and relaxing positivity \cite{virmani}. Implementable for small number of measurements, these methods become computationally demanding when increasing the number of measurements. Nevertheless, they are guaranteed to provide a solution in principle.

The situation is very different when considering continuous sets of measurements, e.g. the set of all projective measurements.
Here the methods for finite sets cannot be applied, not even in principle. One must then construct a LHV (or LHS) model explicitly, by exhibiting the distributions $\pi(\lambda) $ and response functions $p_A (a|x,\lambda ) $ and $p_B (b|y,\lambda)$. This was achieved for certain classes of entangled states, by exploiting their high level of symmetry. However, when considering general states, with less (or no) symmetry, following such an approach is extremely challenging.

In the present work, we follow a different path and present a general method for constructing LHV and LHS models for arbitrary states. The method can be efficiently implemented and will be illustrated with examples. Before presenting the main result we start with a simple example, providing the intuition behind our method.

\textit{Illustrative example}.-- Consider the class of two-qubit Werner states:
\ba \label{werner}
\rho_W(\alpha) = \alpha \ket{ \psi^{-}}\bra{ \psi^{-}}+(1-\alpha)\openone/4
\ea
where $\ket{ \psi^{-}}=(\ket{01}-\ket{10})/\sqrt{2}$ is the singlet state and $\openone/4$ is the two-qubit maximally mixed state. In the range $1/3 < \alpha \leq 1/2$, $\rho_W(\alpha)$ is entangled but unsteerable (hence local) for all projective measurements \cite{werner89}. Werner provided an explicit LHS model by exploiting the high symmetry of the state---$\rho_W(\alpha)$ is invariant under global rotations of the form $U \otimes U$. Here we illustrate the main idea behind our method by rederiving Werner's result, without invoking any symmetry argument.

Consider the set of 12 vectors $\hat{v}_x$ ($x=1,..,12$) on the Bloch sphere forming an icosahedron. This corresponds to a set $\mathcal{M}$ of 6 projective qubit measurements. By performing measurements in $\mathcal{M}$ on the Werner state, Alice prepares for Bob the assemblage
\ba \label{assemblageW}
\sigma_{\pm|x} = \Tr_A [\frac{\openone \pm \hat{v}_x \cdot \vec{\sigma}}{2} \otimes \openone \; \rho_W(\alpha)] ,
\ea
where $\vec{\sigma}$ denotes the vector of Pauli matrices. Using SDP techniques \cite{skrzypczyk13}, we find that this assemblage admits a LHS model for $\alpha \lesssim 0.54$.

This analysis can be extended to all projective measurements as follows. Consider qubit POVMs given by $M^\eta_{\pm | \hat{v}} =  ( \openone \pm \eta (\hat{v} \cdot \vec{\sigma} ) )/2$ with $0 < \eta \leq 1$. The corresponding Bloch vectors (with direction $\hat{v}$ and norm $\eta$) thus form a `shrunk' Bloch sphere of radius $\eta$. Choosing $\eta^* = \sqrt{(5+2\sqrt{5})/15} \approx 0.79$, we obtain a sphere which fits inside the icosahedron. Thus, any noisy measurement $M^{\eta^*}_{\pm | \hat{v}}$ can be expressed as a convex combination of measurements in $\mathcal{M}$ \cite{finite}. Since the assemblage \eqref{assemblageW} (resulting from measurements in $\mathcal{M}$) admits a LHS for $\alpha \lesssim 0.54$, we get that the assemblage resulting from any possible $M^\eta_{\pm | \hat{v}}$ with $\eta \leq \eta^*$ also admits a LHS model. Consequently, the statistics of arbitrary (but sufficiently noisy, i.e. $\eta \leq \eta^*$) measurements performed on the Werner state with $\alpha \lesssim 0.54$ can be simulated. Finally, notice that the statistics of noisy measurements on a given Werner state are equivalent to the statistics of projective measurements on a slightly more noisy Werner state:
\ba
\Tr_A [M^\eta_{\pm | \hat{v}} \otimes  \openone \rho_W(\alpha) ] = \Tr_A [M^1_{\pm | \hat{v}} \otimes  \openone \rho_W(\eta  \alpha) ]
\ea
Hence, states $\rho_W(\alpha)$ with $\alpha \lesssim 0.54 \eta^* \simeq 0.43$ admit a LHS model for all projective measurements. Note that by starting from a polyhedron with more (but nevertheless finitely many) vertices distributed (sufficiently evenly) over the sphere, the above procedure gives a LHS model for Werner states for $\alpha \rightarrow 1/2$ thus converging to Werner's model \cite{finite}. This is the optimal LHS model, since $\rho_W(\alpha)$ becomes steerable for $\alpha>1/2$ \cite{wiseman07}.

\textit{Constructing LHS models}.-- Based on the idea sketched above, we now present a general method for constructing LHS models for continuous sets of measurements, applicable to any entangled state. Formally, we will make use of the following result.

{\bf Lemma 1.} Consider a quantum state $\chi$ (of dimension $d \times d$), with reduced states $\chi_{A,B} = \Tr_{B,A}(\chi)$, and a finite set of measurements $ \{M_{a|x} \}$, such that the assemblage $ \sigma_{a|x} = \Tr_A(M_{a|x} \otimes \openone \chi)$ is unsteerable. Then the state
\ba
\rho= \eta \chi + (1-\eta) \xi_A \otimes \chi_B  \nonumber ,
\ea
where $\xi_A$ is an arbitrary density matrix (of dimension $d$), admits a LHS model for a continuous set of measurements
$\mathcal{M} $. The parameter $\eta$ corresponds to the 'shrinking factor' of $\mathcal{M} $ with respect to the finite set $ \{M_{a|x} \}$ (and given state $\xi_A$). Specifically, consider the continuous set of (shrunk) measurements
\ba
M_{a}^\eta = \eta M_{a} + (1-\eta) \Tr[\xi_A M_a] \openone_d
\ea
for any $M_a \in \mathcal{M} $. Then $\eta$ is the largest number such that all $M_{a}^\eta$ can be written as a convex combination of the elements of $ \{M_{a|x} \}$, i.e. $M_{a}^\eta = \sum_{x} p_x M_{a|x} $ with $\sum p_x=1$ and $p_x\geq 0$.

\begin{proof} The proof is based on the following relation
\ba
\Tr_A[ M_a^\eta \otimes \openone \chi] = \Tr_A[M_a \otimes \openone \rho].
\ea
Since $ \sigma_{a|x} $ is unsteerable, it follows that there exists a LHS model for $\chi$ and all (shrunk) measurements $M_a^\eta$. From the above equality, it follows that $\rho$ admits a LHS model for the continuous set of measurements $\mathcal{M} $.
\end{proof}

This allows us to get an explicit protocol for determining whether a given state $\rho$ admits a LHS model.

{\bf Protocol 1.} The problem is to determine if a target state $\rho$ admits a LHS model for a continuous set of measurements $\mathcal{M} $. Following Lemma 1, we start by picking a finite set $ \{M_{a|x} \}$ (with shrinking factor $\eta$) and a density matrix $\xi_A$. Next we solve the following SDP problem:
\ba \text{find  } & & q^* = \max  q  \\
\text{s.t.  }   & & Tr_A(M _{a|x} \otimes \mathbb{I} \, \chi) = \sum_\lambda \sigma_\lambda D_\lambda(a|x)  \,\,  \forall a,x,  \,\,  \sigma_\lambda \geq 0 \,\,  \forall \lambda \nonumber  \\ \nonumber
& &   \eta \chi + (1-\eta) \xi_A \otimes \chi_B = q \rho + (1-q) \frac{\openone}{d^2}
\ea
where the optimization variable are (i) the positive matrices $\sigma_\lambda$ and (ii) a $d \times d$ hermitian matrix $\chi$ \cite{footnote}. This SDP must be performed considering all possible deterministic strategies for Alice $D_\lambda(a|x)$, of which there are $N= (k_A)^{m_A}$ (where $m_A$ denotes the number of measurements of Alice and $k_A$ the number of outcomes); hence $\lambda = 1,...,N$. If the optimization returns a maximum of $q^*=1$, then $\rho$ admits a LHS model for all measurements in $\mathcal{M}$. If $q^*<1$ we have shown that $\rho' = q \rho + (1-q) \frac{\mathbb{I}}{d^2} $ with $q\leq q^*$ admits a LHS for $\mathcal{M}$.

The performance of the above protocol depends crucially on the choice of the set $ \{M_{a|x} \}$. It must be chosen in a rather uniform manner, over the continuous set $\mathcal{M}$, in order to get a shrinking factor as large as possible. Also, the ability of the protocol to detect a larger range of unsteerable states will improve when increasing the number of measurements contained in $ \{M_{a|x} \}$. Computing the shrinking factor is in general non-trivial, but we give a general procedure in the Appendix.

Based on Protocol 1, we can define a sequence of tests for unsteerability of a given target state $\rho$. In the first test, we consider a finite set $ \{M_{a|x} \}_1$, with shrinking factor $\eta_1$ and apply Protocol 1. We thus get a value of $q^*_1$. If $q^*_1=1$, we conclude that $\rho$ admits a LHS. On the other hand, if $q^*_1<1$, the test is inconclusive, and we must go to the second level. We construct now a new set  $ \{M_{a|x} \}_2$, which includes all measurements in $ \{M_{a|x} \}_1$ and additional ones. By adding sufficiently new measurements, we get a new shrinking factor $\eta_2> \eta_1$. Applying Protocol 1 again, we may get a value of $q^*_2 > q^*_1$ \cite{footnote2}. If $q^*_2=1$ we stop, otherwise we proceed to level 3, and so on.

Clearly, in each new test, the set of measurements considered provides a better approximation to $\mathcal{M}$. Moreover, the sequence of tests will in fact converge in the limit. Indeed, consider any state $\rho$ admitting a LHS model. Then, applying the method to $\rho$, we will be able to show that there is a state $\rho'$, arbitrarily close to $\rho$, which admits a LHS model. Specifically, for any $\epsilon>0$, the state $\rho = (1-\epsilon) \rho + \epsilon \frac{\openone}{d^2}$ will be detected by going to a sufficiently high level in the sequence of tests (see Appendix).

These ideas can be implemented on a standard computer for small sets of measurements $ \{M_{a|x} \}$. For larger sets, the implementation becomes demanding. Nevertheless, the method provides a definite answer in principle.

\textit{Constructing LHV models}.-- These ideas can also be adapted to the problem of constructing LHV models.

{\bf Lemma 2.}  Consider a state $\chi$ and finite sets of measurements $ \{M_{a|x} \}$, $\{M_{b|y} \}$ such that $ p(ab|xy) = \Tr(M_{a|x} \otimes M_{b|y} \chi)$ is local. Then the state
\ba
\rho &=& \eta \mu \rho + \eta (1-\mu) \rho_A \otimes \xi_B   \\ & &
+ \mu (1-\eta) \xi_A \otimes \rho_B +(1-\eta) (1-\mu) \xi_A \otimes \xi_B   \nonumber
\ea
admits a LHV model for the continuous sets of measurements $\mathcal{M_A} $ for Alice and $\mathcal{M_B}$ for Bob. Here $\xi_A$, $\xi_B$ are arbitrary density matrices (of dimension $d$), and $\eta$, $\mu$ denote the shrinking factors of $\mathcal{M_A} $, $\mathcal{M_B}$ with respect to $ \{M_{a|x} \}$, $\{M_{b|y} \}$.

The proof is a straightforward extension of that of Lemma 1. We now have the following protocol.

{\bf Protocol 2.} The problem is whether a target state $\rho$ admits a LHV model for measurements in $\mathcal{M_A}$ and $\mathcal{M_B}$. Following Lemma 2, we take finite sets $ \{M_{a|x} \}$ and $ \{M_{b|y} \}$ (with shrinking factors $\eta_A$, $\eta_B$) and density matrices $\xi_A$ and $\xi_B$. Then we solve the following linear problem:
\ba \text{find  } & & q^* = \max  q  \\
\text{s.t.  }   & & Tr(M _{a|x} \otimes M_{b|y} \chi) = \sum_\lambda p_\lambda D_\lambda(ab|xy)  \,\,  \forall a,b,x,y   \nonumber  \\ \nonumber    & & p_\lambda \geq 0 \,\,  \forall \lambda  \\
& &  q \rho + (1-q) \frac{\mathbb{I}}{d}    =  \eta \mu \chi + \eta (1-\mu) \chi_A \otimes \xi_B  \nonumber \\ \nonumber
& &  \quad \quad  \quad \quad  +\mu (1-\eta) \xi_A \otimes \chi_B +(1-\eta) (1-\mu) \xi_A \otimes \xi_B
\ea
where the optimization variable are (i) positive coefficients $p_\lambda$ and (ii) a $d \times d$ hermitian matrix $\chi$ \cite{footnote}. Given $m_A$ ($m_B$) measurements with $k_A$ ($k_B$) outcomes for Alice (Bob), one has $N = (k_A)^{m_A} (k_B)^{m_B}$ local deterministic strategies $D_\lambda(ab|xy)$, and $\lambda = 1,...,N $.

Again, this leads to a sequence of tests. 
In the first level, consider finite sets $ \{M_{a|x} \}_1$ and $ \{M_{b|y} \}_1$, with shrinking factors $\eta_1$ and $\mu_1$, and apply Protocol 2. If $q^*_1=1$, we conclude that $\rho$ admits a LHV model. If $q^*_1<1$, we proceed to the second level. We construct $ \{M_{a|x} \}_2$ and $ \{M_{b|y} \}_2$, including all measurements used in the first level plus additional ones. Hence we get better shrinking factors $\eta_2 \geq  \eta_1$ and $\mu_2 \geq \mu_1$. Applying Protocol 2, we may get a value of $q^*_2 > q^*_1$ \cite{footnote2}. If $q^*_2=1$ we stop, otherwise we go to level 3, and so on.

Here, the sequence will also converge here in the limit. Indeed, consider any local state $\rho$. There is $\rho'$, arbitrarily close to $\rho$, which the method will show to have a LHV model (see mental Material). Again, implementations on standard computers is possible for small sets $ \{M_{a|x} \}$ and $ \{M_{b|y} \}$.

\textit{Applications}.-- We now illustrate the practical relevance of the above methods, by constructing LHS and LHV model for classes of entangled states for which previous methods failed. A non-trivial issue is to obtain the shrinking factor for the sets of measurements that are used. For projective qubit measurements, this can be done efficiently by exploiting the Bloch sphere geometry (see Appendix). Hence we consider entangled states where (at least) one of the systems is a qubit, and focus primarily on projective measurements.

Consider first the class of two-qubit states:
\ba \label{MMM}
\rho(\alpha,\theta) = \alpha \ket{ \psi_{\theta}} \bra{ \psi_{\theta}} + (1-\alpha) \mathbb{I}_4/4
\ea
that is, partially entangled states $\ket{ \psi_{\theta}} = \cos\theta \ket{00} + \sin\theta \ket{11}$ mixed with white noise. The state is entangled for $\alpha > [1 + 2 \sin(2 \theta)]^{-1}$, via partial transposition \cite{peres}. Using Protocols 1 and 2 we find parameter ranges $\alpha,\theta$ where the state is unsteerable and local (see Fig.2); details in Appendices B and C. So far, relevant bounds for the locality of the above state were only given for $\theta = \pi/4$, i.e. for Werner states \eqref{werner}. In this case, we obtain an almost optimal LHS model ($\alpha \simeq 0.495$), and a LHV model which improves Werner's one ($\alpha \simeq 0.554$), but below the model of Ref. \cite{acin06} which achieved $\alpha \simeq 0.659$.

\begin{figure}[t!] \begin{center}
\includegraphics[width=\columnwidth]{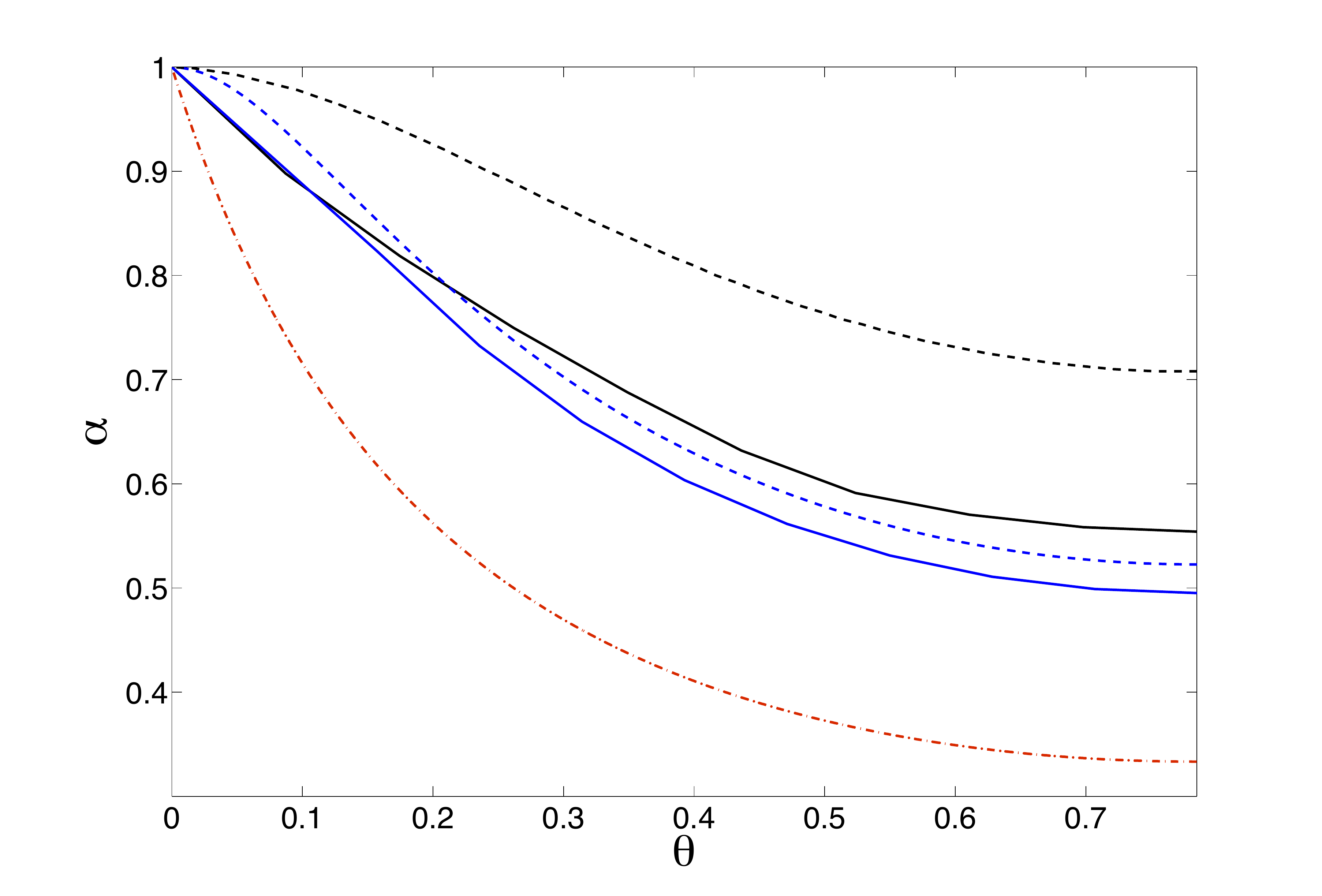}
\caption{The state $\rho(\alpha,\theta)$ of eq. \eqref{MMM} is entangled above the dash-dotted (red) line. Our method guarantees unsteerability below the solid blue line, while the state is steerable above the dashed blue line. Moreover, we can guarantee that the state is local below the solid black line, while it is nonlocal above the dashed black line.}
\end{center}
\end{figure}  	

Another application is to show that a non-full rank entangled state (i.e. on the boundary of the set of quantum states) can admit a LHS model. Specifically, we find that the state $ \rho = \sum_{k=1}^3 p_k \ket{\psi_k} \bra{\psi_k} $, where $p_1 = 0.4$, $p_2 = 0.05$, and
$\ket{\psi_1}=  \cos\theta \ket{00} + \sin\theta \ket{11}$, $\ket{ \psi_2} = \sin\theta \ket{00} - \cos\theta \ket{11}$ and $\ket{ \psi_3} =  \ket{10}$, where $\theta = 10^{-4} \pi$, admits a LHS model.

Next we discuss higher dimensional states, of the form
\ba \label{qubit-qudit}
\rho(\alpha,d) = \alpha \ket{ \psi^-} \bra{ \psi^-} + (1-\alpha) \openone_{2} / 2 \otimes \openone_{d} / d
\ea
i.e. a two-qubit singlet state $\ket{ \psi^{-}}$ mixed with higher dimensional noise. The above state is entangled for $\alpha > (1 + d )^{-1}$ (via partial transposition). We obtain lower bounds on $\alpha$ (for $d\leq 5$) for the state to admit a LHS model; see Appendix.


Moreover, we found that the well-known bound entangled state of Ref. \cite{pawel}, of dimension $2 \times 4$, admits a LHS model; see Appendix. This complements recent results showing that bound entangled states can lead to steering \cite{moroder14} and Bell nonlocality \cite{vertesi14}.

These methods can also be applied to multipartite entangled states. In particular, we could reproduce the result of Ref. \cite{toth05}, constructing a LHV model for a genuine tripartite entangled state.

Finally, we also applied our method considering general POVMs on the two-qubit Werner state \eqref{werner}. In this case, we obtain a LHS model for visibility $\alpha \simeq 0.36> 1/3$, which shows that the method can be applied in practice for general POVMs (see Appendix).

\textit{Discussion}.-- We discussed a procedure for constructing LHS or LHV models, applicable to any entangled states. The method can be used iteratively, and converges in the limit. We illustrated its practical relevance. Moreover, all models we construct require only a finite amount of shared randomness \cite{finite}.

We believe these methods will find further applications. First, we note that a simplified version of our method was recently used to demonstrate the effect of post-quantum steering \cite{belen}. More generally, the method can be applied to systems of arbitrary dimension, considering POVMs, and multipartite systems. Here the main technical difficulty consists in obtaining good shrinking factors for sets of measurements beyond projective qubit ones. Any progress in this direction would be interesting.

\emph{Acknowledgements.} We thank Joe Bowles, Daniel Cavalcanti, Leonardo Guerini, Rafael Rabelo, Denis Rosset and Paul Skrzypczyk for discussions. We acknowledge financial support from the Swiss National Science Foundation (grant PP00P2\_138917 and Starting grant DIAQ). Research at Perimeter Institute is supported in part by the Government of Canada
through NSERC and by the Province of Ontario through MRI.

\emph{Note added.} In independent work, Cavalcanti and colleagues presented related and complementary results \cite{cavalcanti15}.




\section{Appendix A. Calculating shrinking factors}

\subsection{Projective qubit measurements}

To apply Protocols 1 and 2, one first chooses a finite set of measurements $M_k$ and calculates the shrinking factor $\eta$ of a continuous set $\mathcal{M}$ with respect to $M_k$ (and given state $\xi_A$). This step is in general non-trivial. For the case of two-qubit projective measurements and $\xi_A=\openone_2/2$, we now present a simple and efficient method.

Consider the following set of POVMs
\ba
\mathcal{M}^{\eta} = \{ M^{\eta} | M^{\eta} = \eta M + (1-\eta) \mathbb{I} \}
\ea
where $M$ is a two-qubit projective measurement and $\mathbb{I} = \{ \openone_2 / 2,\openone_2 / 2 \}$. Using the Bloch sphere representation we can write any $M$ as
\ba
M = \{ M_{+}, M_{-} \} , \; M_{\pm} = \frac{ \openone \pm \hat{v} \cdot \vec{\sigma} }{2}
\ea
where $\vec{\sigma} = \{ \sigma_x, \sigma_y, \sigma_z \}$ contains the Pauli matrices and $\hat{v}$ is a normalized Bloch vector. For an element $M^{\eta}$ of the set $\mathcal{M}^{\eta}$ we therefore have
\ba
M^{\eta}= \{ M^{\eta}_{+}, M^{\eta}_{-} \} , \; M^{\eta}_{\pm} = \frac{ \openone \pm \eta \hat{v} \cdot \vec{\sigma} }{2} .
\ea
That is, in the Bloch sphere representation, the set represents a sphere of radius $\eta$. Hence the shrinking factor of such a set with respect to a finite set of projectors $\{ M_x \}$, represented by Bloch vectors $\{ \hat{v}_x \}$, is simply the radius of the largest sphere that can fit inside the polyhedron generated by $\{ \hat{v}_x \}$. This radius can be computed with arbitrary precision for any polyhedron by characterising its facets, the radius of the inscribed sphere being then the distance from the center of the sphere to the closest facet.

In all applications we presented in the main text, we used highly symmetric polyhedra, which have the property that all facets are equidistant from the center. Namely, we have used the cube and the icosahedron, which have respective shrinking factors
\ba \nonumber
\eta_{\text{cube}} = \frac{1}{\sqrt{3}} \approx 0.577 \, , \,\, \eta_{\text{ico}} = \sqrt{ \frac{5+2\sqrt{5}}{15}} \approx 0.795.
\ea

Note that for more general measurements, i.e. two-qubit POMVs or higher dimensional measurements, the above method for computing the shrinking factor, or even to put lower bounds on it, cannot be directly applied, as there is no notion of Bloch sphere in those cases. Nevertheless, we provide an explicit method below.

\subsection{General measurements}


The method is applicable to any set of POVMs of arbitrary dimension and is essentially a generalization of the above case. Let us recall that the shrinking factor of $\mathcal{M} $ with respect to the finite set $ \{M_{a|x} \}$ (and given state $\xi_A$) is the largest $\eta$  such that any element of the continuous set of (shrunk) measurements $\mathcal{M}^{\eta}$ defined by
\ba
M_{a}^\eta = \eta M_{a} + (1-\eta) \Tr[\xi_A M_a] \openone_d
\ea
can be written as a convex combination of the elements of $ \{M_{a|x} \}$, i.e. $M_{a}^\eta = \sum_{x} p_x M_{a|x} $ ($\forall a$) with $\sum p_x=1$ and $p_x\geq 0$. Geometrically, this corresponds to the fact that $\mathcal{M}^{\eta}$ is inside the convex hull of the $ \{M_{a|x} \}$. This can be checked by using the facet-representation of the polytope defined by the $ \{M_{a|x} \}$. Indeed, one can use any linear parametrization of the $ \{M_{a|x} \}$ in order to write them as real vectors, meaning the polytope is described by the vertices $v_x \in \mathbb{R}^n$, and a point $p \in \mathbb{R}^n$ is inside the polytope if and only if 
\ba\label{dot}
(f_k,p) \leq b_k \; \forall k=1...N
\ea
where the $f_k \in \mathbb{R}^n$ are the facets of the polytope defined by the vertices $v_x$, with bounds $b_k$ (which represent their distances from the zero vector) and $(  ,  )$ is the dot product. To prove that a set $\mathcal{M}^{\eta}$ is contained inside the polytope  $ \{M_{a|x} \}$ one can therefore show that for each facet $f_k$ of this polytope there is no point $p \in \mathcal{M}^{\eta}$ such that $(f_k,p) > b_k$. Choosing $\mathcal{M} $ to be qudit POVMs, one can do it using SDPs. For convenience we can write the facets in a matrix form $F_k^a$, where $a$ denotes the POVM outcome. In this case, the dot product in \eqref{dot} between a facet and a POVM $\{N_a\}$ is given by $\sum_a \Tr[F_k^a N_a] $. Hence $\{N_a\}$ is inside the polytope if and only if $\sum_a \Tr[F_k^a N_a] \leq b_k$. Now if $\mathcal{M}$ is taken to be the set of $n$-outcome qudit POVMs consider the following SDP: 
\ba\label{SDP} \text{find  } & & \max  \sum_{a=1}^n \Tr[F_k^a N_a]   \\
\text{s.t.  }   & & N_a = \eta M_a + (1-\eta) \Tr[\xi_A M_a] \openone_d,  \,\,  \forall a=1..n  \,\,   \nonumber  \\ \nonumber
& & M_a \geq 0 \,\,  \forall a=1..n , \,\,\, \sum_{a=1}^n M_a = \openone_d \nonumber
\ea
where the optimization variables are the positive matrices $M_a$. One can run this SDP for each facet $F_k$ and if the result of the objective function is smaller than $b_k$ for every $k$ one has then ensured that the set $\mathcal{M}^{\eta}$ is inside the polytope $ \{M_{a|x} \}$, \textit{i.e.} $\eta$ is 'small enough', otherwise, $\mathcal{M}^{\eta}$ is not contained by the polytope $ \{M_{a|x} \}$, \textit{i.e.} $\eta$  is 'too big'. 

Note indeed that $\eta$ has to be fixed all along the process, but the method provides a way to obtain the shrinking factor with an arbitrary precision as it can be iteratively implemented: starting from any $0 < \eta_1 < 1$ if $\eta_1$ is small enough (meaning $\Tr[F_k^a N_a] \leq b_k$ for all $k$) it implies that $\eta_1 \leq \eta$, where $\eta$ is the actual shrinking factor. Then one can for instance choose $\eta_2 = (\eta_1 + 1)/2$, and if, say, $\eta_2$ is too big, it means that $\eta_2 \geq \eta$ and one can take $\eta_3=( \eta_1 + \eta_2)/2$, etc. until the required precision. The starting point $\eta_1$ can be chosen by a numerical estimate of $\eta$: for a fixed measurement $M \in \mathcal{M}$ it is an easy linear programming task to find the largest $p$ such that $M^p= p M_{a} + (1-p) \Tr[\xi_A M_a] \openone_d$ is inside the polytope defined by the $ \{M_{a|x} \}$. One can thus parametrize $M$ and using a minimization algorithm can try to find the $M$ that minimizes $p$, which will be an upper bound on $\eta$, but hopefully not too far from it.  

Note also that for any dimension $d$ the special case of two-outcome POVMs can be tackled without using SDP \eqref{SDP}. One has in fact to find the maximum of $\Tr[F_k^1 N_1]+\Tr[F_k^1 N_2]$ (where $N_a = \eta M_a + (1-\eta) \Tr[\xi_A M_a]$), which can be rewritten as $\Tr[(F_k^1 -F_k^2) N_1] + \Tr[F_k^1]$ (using $N_1+N_2=\openone$) and setting $F_k= F_k^1 -F_k^2$ we get
\ba
\Tr[F_k N_1] & &  = \Tr[F_k (\eta M_1 + (1-\eta) \Tr[\xi_A M_1] \openone_d)  \\ \nonumber
& & = \Tr[F_k^{\eta} M_1] 
\ea
where $F_k^{\eta} = \eta F_k + (1-\eta) \Tr[F_k] \xi_A$. The maximum is therefore the sum of positive eigenvalues of $F_k^{\eta}$, achieved by letting $M_1$ be the sum of the corresponding projectors. Notice in particular that if $\xi_A$ is the maximally mixed state the eigenvalues of $F_k^{\eta}$ are just the $\eta \lambda_l(F_k) + (1-\eta) \Tr[F_k]/d$, where the $\lambda_l(F_k)$ is the $l$-th eigenvalue of $F_k$.

\section{Appendix B. Illustration of iterative procedure}

In the main text we have discussed, as application of our methods, the following class of two-qubit states:
\ba \label{MMM2}
\rho(\alpha,\theta) = \alpha \ket{ \psi_{\theta}} \bra{ \psi_{\theta}} + (1-\alpha) \mathbb{I}_4/4
\ea
where $\ket{ \psi_{\theta}} = \cos(\theta) \ket{00} + \sin(\theta) \ket{11}$. In Fig.2 we presented the best results for LHS and LHV models. Below we give more details about how we obtained these curves, which also illustrate the different levels of the procedure.

In the first level, we use the icosahedron. That is, the finite set $ \{M_{a|x} \}_1$ (as well as $ \{M_{b|y} \}_1$ for LHV models) is given by Bloch vectors corresponding to the vertices of the icosahedron. We thus get shrinking factor $\eta_1 = \eta_\text{ico}$. Moving to the second level, we now add 10 new measurements (i.e. 20 vertices) corresponding to the geometrical dual of the icosahedron. We thus get a new set $ \{M_{a|x} \}_2$ (and $ \{M_{b|y} \}_2$ for LHV) with 16 measurements (32 vertices). This gives a better shrinking factor of $\eta_2 \approx 0.923$. In level 3, we proceed similarly (i.e. adding vertices of the dual) and obtain a polyhedron with 92 vertices, and a shrinking factor of $\eta_3 \approx 0.971$. In level 4, we get 272 vertices and $\eta_4 \approx 0.989$.

\begin{figure}[t!] \begin{center}
		\includegraphics[width=\columnwidth]{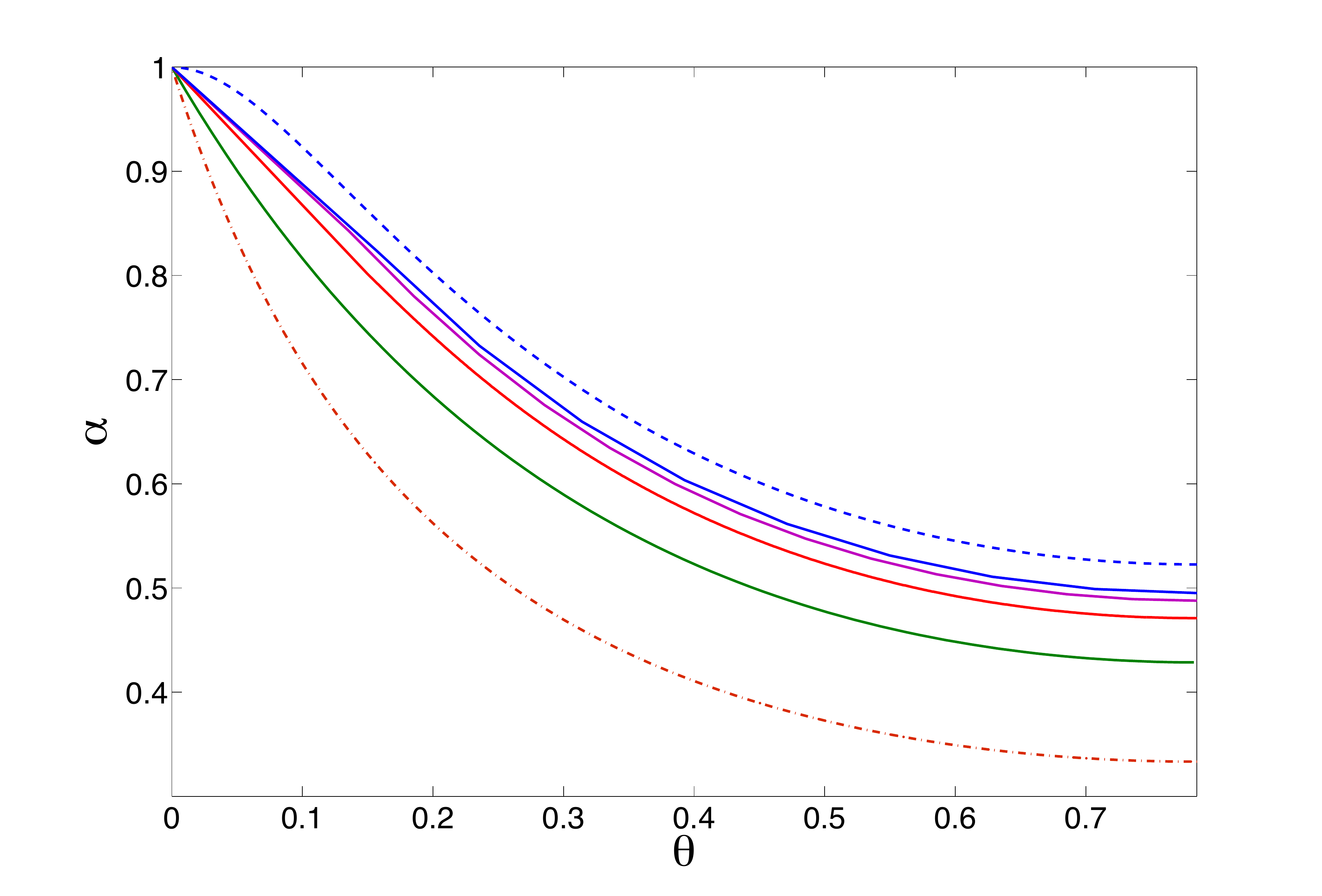}
		\caption{Results for unsteerability of states \eqref{MMM2}. The solid curves are obtained by using Protocol 1 at different levels in the sequence of tests; level 1 in green, level 2 in red, level 3 in purple, and level 4 in blue. Clearly, we get a better results in each level, and level 4 brings us relatively close to the (known) limit of steerability (dashed blue curve); obtained using 9 projective measurements and the SDP method of \cite{pusey13,skrzypczyk13}. We believe that the curve obtained in level 4 is close to the actual limit of steerability, the dashed blue curve being most probably suboptimal due to the small number of measurements considered. The separability limit is given by the dash-dotted red curve.}
	\end{center}
\end{figure}

\begin{figure}[t!] \begin{center}
		\includegraphics[width=\columnwidth]{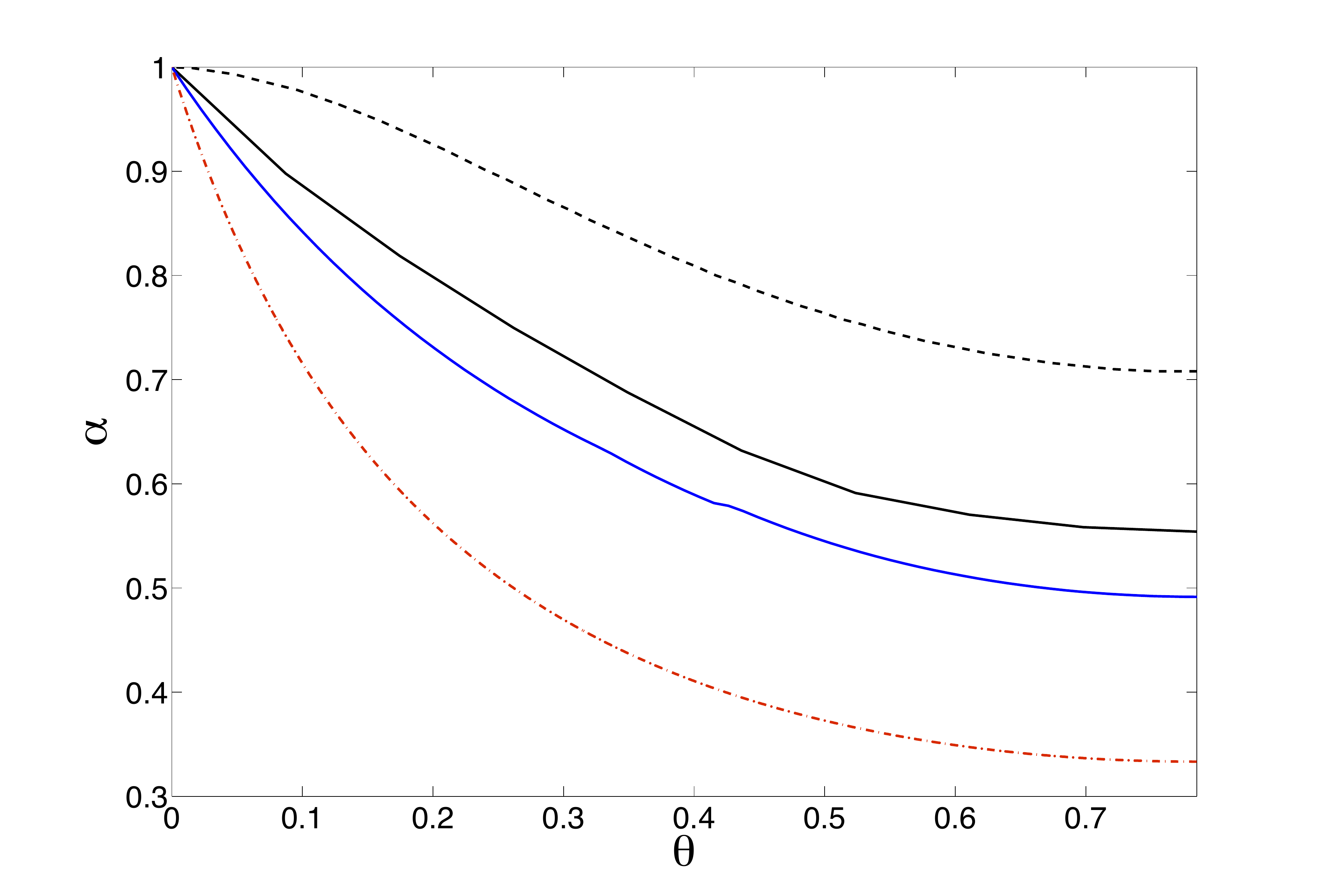}
		\caption{Results for the locality of states \eqref{MMM2}. The curves are obtained by using Protocol 2 at level 1 and 2 in the hierarchy. Again, we observe that level 2 considerably improves on level 1. We also give the limit of separability (dash-dotted red curve) and Bell nonlocality (dashed black curve), corresponding to CHSH violation via the Horodecki criterion \cite{horoCHSH}.}
	\end{center}
\end{figure}

The results we obtained for LHS (up to level 4) and LHV (up to level 2) models using the sequence of tests are presented on Figs. 3 and 4. Note that computation times grow rapidly when moving to higher levels. In particular, the main difficulty comes from the number of deterministic strategies to be considered, which grows exponentially as a function of the number of measurements. Nevertheless, one can obtain good bounds (even tight ones) by focussing on a subset of (well-chosen) deterministic strategies, which considerably speeds-up computation; details in Appendix C. Moreover, note that the orientation of the polyhedra must be taken into account. For instance, for states \eqref{MMM2} and small $\theta$, we observe that different orientations may lead to very different results.

\section{Appendix C. Choice of the deterministic strategies}

The time required to solve the SDP of Protocol 1, and the linear problem of Protocol 2, increases exponentially with the number of measurements $ \{M_{a|x} \}$ considered. This is due to the convex structure of the problem: Protocols 1 and 2 essentially find a decomposition of a point in terms of vertices of the local set, \textit{i.e.} the deterministic strategies (which are vertices of the set since they cannot be recovered as combination of other strategies). These are all the possibilities of attributing one outcome to a particular measurement. Considering $m$ measurements with $k$ outcomes, we have $k^m$ deterministic strategies.

The number of variables involved in the protocols is essentially given by the number of deterministic strategies, making the problem quickly infeasible as $m$ grows. On the other hand, taking more measurements allows for larger shrinking factors $\eta$ and eventually to detect more states. It is however possible to run the method using relatively large $m$ by using the following trick. Instead of considering all deterministic strategies, we focus on a subset of them. The bound we obtain might be suboptimal, but will nevertheless hold. Indeed, choosing appropriately the subset of deterministic points will be crucial.

Let us illustrate this method for the case of qubit projective measurements, and for the construction of LHS models via Protocol 1. Consider $m$ measurements, we thus have $2^m$ deterministic strategies. Up to $m = 16$, the problem is feasible considering all deterministic strategies (approx. 16 hours on a standard computer). However, in order to go to larger $m$, the number of strategies must be restricted. A simple and efficient technique here is the following: we select the strategies compatible with the response function of Werner's model \cite{werner89}. Specifically, for a Bloch vector $\hat{\lambda}$ and a  measurement direction $\hat{v}$ Alice outputs $\pm 1$ with probability:
\ba
p(\pm|\hat{v},\hat{\lambda}) = \frac{1 \pm \text{sign}(\hat{v} \cdot \hat{\lambda} ) }{2} .
\ea
That is, Alice outputs $+1$ for measurement directions which are in the half sphere around $\hat{\lambda}$, and $-1$ otherwise, as in Werner's original model \cite{werner89}. Here, given a set of $m$ measurements, with vectors $ \hat{v}_1,\hat{v}_2,...,\hat{v}_m $, we consider only deterministic strategies compatible with the above response function. For instance, three vectors which are not in the same half sphere cannot all give the same outcome. Hence, many deterministic strategies can be eliminated. In fact, the problem becomes now feasible on a standard computer up to $m \lesssim 200$, leading to shrinking factors of $\eta \simeq 0.99$.

\section{Appendix D. LHS model for bound entangled state and higher dimensional states}

In \cite{pawel}, the first (and possibly simplest) example of a bound entangled state ($2 \times 4$) is presented. First consider
\ba
\ket{\psi_i} = \frac{1}{\sqrt{2}} ( \ket{0, i} + \ket{1, i+1} )
\ea
for $i=0, 1, 2,$ and
\ba
\sigma_{insep}=\frac{2}{7} \sum^{3}_{i=1} \ketbra{\psi_i}{\psi_i} + \frac{1}{7} \ketbra{02}{02},
\ea
which is entangled (via the partial transpose criterion). Finally the family of states we are interested in are:
\ba
\sigma_b = \frac{7b}{7b + 1} \sigma_{insep} + \frac{1}{7b + 1} \ketbra{\phi_b}{\phi_b}.
\ea
where $\ket{\phi_{b}} = \ket{1} \otimes (\sqrt{\frac{1+b}{2}} \ket{0} + \sqrt{\frac{1-b}{2}} \ket{2}), 0 \leq b \leq 1.$
The state $\sigma_b$ is bound entangled for $ 0 < b < 1$. Using again Protocol 1 with the icosahedron, we find that $\sigma_b$ admits a LHS model for the whole range $b \in [0,1]$. This shows that a bound entangled state can admit a LHS model.

Next we discuss higher dimensional states, of the form
\ba \label{qubit-qudit}
\rho(\alpha,d) = \alpha \ket{ \psi^-} \bra{ \psi^-} + (1-\alpha) \openone_{2} / 2 \otimes \openone_{d} / d
\ea
where $\ket{ \psi^{-}}$ is the two-qubit singlet state, and $\openone_{d}$ the maximally mixed state of dimension $d$. Using the partial transposition criterion we get that this state is entangled if $\alpha > (1 + d )^{-1}$. We obtain lower bounds on $\alpha$ (for $d\leq 5$) for the state to admit a LHS model; see Table 1.

\begin{table}[t!] \label{table1}
	\begin{tabular}{| c || c|c|c|c|}
		\hline
		d  & 2 & 3 & 4 & 5   \\
		\hline \hline
		$\alpha_{ENT}$ & 0.33 & 0.25 & 0.20 & 0.16 \\
		$\alpha_{LHS}$ & 0.49 & 0.38 & 0.32 & 0.28 \\
		\hline
	\end{tabular} \caption{Bounds on the steerability of the states \eqref{qubit-qudit}, as a function of dimension $d$. The state is entangled for $\alpha \geq \alpha_{ENT}$, and unsteerable for $\alpha \leq \alpha_{LHS}$. }
\end{table}

\section{Appendix E. Local models for POVMs}

Using the technique mentioned above we can compute the shrinking factors of the set of two-qubit POVMs with respect to a finite set $ \{M_{a|x} \}$. We need only to consider the set of 4-outcome POVMs since extremal POVMs of dimension $d$ have at most $d^2$ outcomes \cite{dariano05}. Once again we can choose the icosahedron, \textit{ie} the set of projective measurements the directions of which are the vertices of the three-dimensional icosahedron. More precisely we consider all relabellings of $ \{ P_+ , P_- , 0 , 0 \}$ for $P_+$ being a projector onto a vertex of the icosahedron and $P_-$ onto the opposite direction. In addition we consider the four relabellings of the trivial measurement $ \{\openone_2  , 0 , 0 , 0 \}$, which comes for free as it cannot help to violate any steering or Bell inequalities and consequently does not even need to be inputed in Protocol 1 or 2 . The set thus have 76 elements, but we need to take into account only 6 of them when running the Protocol, corresponding to the vertices in the upper half sphere of the icosahedron.

With this choice we find that $\eta \approx 0.673$ and we can run again Protocol 1 and 2 for multi-qubit states, obtaining this time local models for all POVMs. For instance, applying it to the two qubit Werner state: 
\ba \label{werner}
\rho_W(\alpha) = \alpha \ket{ \psi^{-}}\bra{ \psi^{-}}+(1-\alpha)\openone/4
\ea
we find a LHS model up to up to $\alpha=0.54 \times 0.673=0.363 $ for all POVMs.

\section{Appendix F. Convergence of the sequence of tests}

The goal of this section is to prove the convergence of our sequence of tests. More concretely, for any state $\rho$ that is unsteerable from Alice to Bob, our method certifies, in finitely many steps, that the state $\rho^{\epsilon}= (1-\epsilon) \rho + \epsilon \frac{\openone}{d^2}$ for any $\epsilon>0$ is unsteerable from Alice to Bob. Similarly for states which admit a LHV model. Since the proof of convergence for the case of LHS and LHV models are essentially the same, we focus on the former.

For clarity, we break the proof in parts and the central idea can be explained in two steps: (i) showing that for any given finite set of measurements $\{M_{a\vert x} \}$ with shrinking factor $\eta$, the algorithm detects $\rho^{\epsilon}$ where $\epsilon$ becomes arbitrarily small when $\eta \rightarrow 1$, and (ii) it is possible to find a finite set of measurements for any $\eta<1$.

First, consider $\rho$ unsteerable from Alice to Bob (from now on we omit mentioning 'from Alice to Bob'), and $\{M_{a\vert x} \}$ a finite set of measurements with a shrinking factor $\eta$. Then the state $\rho^{\eta}= \eta \rho + (1-\eta) \frac{\openone}{d}\otimes \rho_B $, which is clearly unsteerable, will be detected by using Protocol 1 and choosing $\chi=\rho$ and $\xi_A=\openone/d$.

Next, note that if $\rho^\eta $ is unsteerable, then $\rho'=\eta' \rho + (1-\eta') \frac{\openone}{d^2} $ is also unsteerable, where $\eta'=\eta/(d(1-\eta)+\eta)$. This is straightforward as $\rho'$ can be expressed as convex mixture of $\rho^\eta$ and a separable state:
\ba
\rho'  &=& \lambda \rho^\eta + (1-\lambda) \frac{\openone}{d}\otimes \rho_B^\perp \\ \nonumber
\rho_B^\perp &=& \frac{1-\lambda \eta}{1-\lambda} \frac{\openone}{d} - \frac{\lambda (1-\eta)}{1-\lambda}\rho_B \,, \quad \lambda = \frac{\eta'}{\eta}.
\ea

Next we note that for any dimension $d$, $\eta' \rightarrow 1$ when $\eta \rightarrow 1$. Hence what is left to show is that the shrinking factor $\eta$ can be made arbitrarily close to one using a finite set of measurements. Intuitively this follows from the fact that any convex set can be approximated arbitrarily well by considering a finite number of (well-chosen) points on its boundary. Formally, this can be shown as follows. We first define a metric in the space of subsets of $\mathbb{R}^n$ that captures the intuition of maximal distance between two points of two different sets.
\begin{definition}
	The Hausdorff distance between two sets is defined as
	\ba
	d(A,B)=\max{\{\sup_{a\in A}d(a,B),\sup_{b\in B}d(b,A)\}}
	\ea
	where $d(x,A)=\inf_{a\in A} d(x,a)$ is the distance between a point $x\in \mathbb{R}^d$ and a set $A\subseteq \mathbb{R}^d$ for a given metric $d:\mathbb{R}^d\to \mathbb{R}^+$.
\end{definition}

Any compact convex set $\mathcal{K}\subseteq\mathbb{R}^n$ can be arbitrarily approximated by a family of $n$-vertex polytopes $\mathcal{P}_n$. This can be done by showing that the Hausdorff distance between these two sets becomes arbitrarily small when $n$ is large.

\begin{lemma}
	Let $\mathcal{K} \in \mathbb{R}^d$ be a compact convex set. There is a family of $n$-vertex polytope $\mathcal{P}_n$ that is contained  by $\mathcal{K}$  and $d(\mathcal{K},\mathcal{P}_n)\leq \frac{c(\mathcal{K})}{n^{2/(d-1)}}$ where $c(\mathcal{K})$ depends only on 		$\mathcal{K}$.
\end{lemma}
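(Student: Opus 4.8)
The plan is to take $\mathcal{P}_n$ to be the convex hull of a carefully chosen finite subset of the boundary $\partial\mathcal{K}$ --- essentially an $\epsilon$-net of $\partial\mathcal{K}$ together with a suitable family of extreme points --- and then to balance two quantitative estimates: that $O(\epsilon^{-(d-1)})$ such points suffice, and that the resulting inscribed polytope is Hausdorff-distance $O(\epsilon^2)$ from $\mathcal{K}$; taking $\epsilon\sim n^{-1/(d-1)}$ then delivers the rate $n^{-2/(d-1)}$. As preliminary reductions I would assume $\mathcal{K}$ is full-dimensional (otherwise pass to its affine hull, with $d$ its dimension; if $\mathcal{K}$ is itself a polytope the claim is trivial, taking $\mathcal{P}_n=\mathcal{K}$), and, after a translation, fix radii $0<r\le R<\infty$ with $B(0,r)\subseteq\mathcal{K}\subseteq B(0,R)$; all constants below may depend on $r,R$, hence on $\mathcal{K}$.

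Two ingredients then carry the argument. \emph{(i) Counting.} A maximal $\epsilon$-separated subset $V\subseteq\partial\mathcal{K}$ is automatically an $\epsilon$-net; the balls $B(v,\epsilon/2)$, $v\in V$, are disjoint and each meets the convex hypersurface $\partial\mathcal{K}$ in a set of $(d-1)$-dimensional Hausdorff measure at least $c_d\,\epsilon^{d-1}$, while $\mathcal{H}^{d-1}(\partial\mathcal{K})\le\mathcal{H}^{d-1}(\partial B(0,R))<\infty$ by monotonicity of surface area under inclusion of convex bodies; hence $|V|\le c_1(\mathcal{K})\,\epsilon^{-(d-1)}$. \emph{(ii) The quadratic (``sagitta'') estimate.} Since $x\mapsto d(x,\mathcal{P}_n)$ is convex and $\mathcal{P}_n\subseteq\mathcal{K}$, one has $d(\mathcal{K},\mathcal{P}_n)=\sup_{x\in\mathcal{K}}d(x,\mathcal{P}_n)$, attained at some $x\in\partial\mathcal{K}$. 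Choose coordinates with $x=0$ and a supporting hyperplane at $x$ equal to $\{\,y_d=0\,\}$, $\mathcal{K}\subseteq\{\,y_d\le 0\,\}$. At a boundary point admitting an inscribed ball of radius $\rho$ tangent there, convexity pins the upper part of $\partial\mathcal{K}$ near $x$ into a region $\{\,0\ge y_d\ge -C\rho^{-1}|y'|^2\,\}$ and shows it is a graph over the $(d-1)$-disk of radius $\rho$ about $x'=0$; in particular any boundary point within distance $O(\epsilon)$ of $x$ has $|y_d|=O(\epsilon^2)$. Consequently the net points within $O(\epsilon)$ of $x$ project onto an $\epsilon$-net of a $(d-1)$-disk about $0$ of radius $\gg\epsilon$, so $\mathrm{conv}$ of those net points contains a point $q$ with $q'=0$ and $|q_d|=O(\epsilon^2)$, whence $d(x,\mathcal{P}_n)\le|q|=O(\epsilon^2)$. (If instead $x$ is an exposed point of $\mathcal{K}$, then $x$ itself can be put into $V$ and contributes distance $0$.) Combining (i) and (ii) with $\epsilon\sim n^{-1/(d-1)}$, and padding $V$ to exactly $n$ points, finishes the proof.

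The heart of the matter --- and the main obstacle --- is making step (ii) uniform over all of $\partial\mathcal{K}$ when the boundary is neither smooth nor strictly convex. Between the two clean cases (a tangent inscribed ball exists / $x$ is exposed) lie boundary points on faces of dimension $0<k<d-1$ (``edges'') and extreme points that are not exposed, where neither the modulus of smoothness nor the ``width'' of the normal cone is controlled uniformly by $\epsilon$ alone; for a naive $\epsilon$-net one then only recovers the suboptimal rate $n^{-1/(d-1)}$ there. Obtaining the uniform quadratic bound is precisely the content of the classical polytopal-approximation theorems of Bronshteyn and Ivanov \cite{bronstein75,bronstein08}: one decomposes the boundary, uses that exposed points are dense among extreme points (Straszewicz's theorem) so that each low-dimensional face is recovered as a Carath\'eodory combination of near-exposed points already in $V$, and controls the accumulated error. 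I would therefore either invoke \cite{bronstein75,bronstein08} for the general compact convex $\mathcal{K}$, or present the self-contained argument above in the case --- sufficient for all applications here, \eg the Bloch ball --- where $\mathcal{K}$ is smooth, and obtain the general statement by approximation.
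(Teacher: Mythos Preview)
The paper does not prove this lemma at all; it simply refers the reader to Refs.~\cite{bronstein75,bronstein08}. Your proposal ultimately does the same for the general compact convex case, so the two agree. You go further by sketching the mechanism --- an $\epsilon$-net $V\subset\partial\mathcal{K}$, the packing bound $|V|\le c(\mathcal{K})\,\epsilon^{-(d-1)}$, and the quadratic ``sagitta'' estimate giving $d(\mathcal{K},\mathrm{conv}\,V)=O(\epsilon^2)$ --- which is precisely the Bronshteyn--Ivanov strategy. Your sketch is correct when $\partial\mathcal{K}$ satisfies a uniform inner rolling-ball condition (in particular for the Euclidean balls relevant to the applications here), and you correctly flag that making the quadratic bound uniform over an arbitrary convex boundary is exactly the nontrivial content of the cited works. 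One small streamlining of step~(ii): rather than arguing that the nearby net points project onto a full $(d-1)$-net around $x'$ so that $0$ lies in their convex hull, it is cleaner to compare support functions directly --- for the support point $x$ of $\mathcal{K}$ in a direction $u$ and any $v\in V$ with $|x-v|\le\epsilon$, the sagitta bound already yields $h_{\mathcal{K}}(u)-h_{\mathrm{conv}\,V}(u)\le\langle x-v,u\rangle=O(\epsilon^{2})$, which gives the Hausdorff estimate immediately.
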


For a proof of that we suggest Refs.~\cite{bronstein75,bronstein08}.
\begin{lemma}
	Let $\mathcal{K}$ be a compact convex set in $\mathbb{R}^d$ and $\mathcal{K}_{in}$ a convex set strictly contained in $\mathcal{K}$. There is always a polytope $\mathcal{P}$ that contains $\mathcal{K}_{in}$ that is contained in $\mathcal{K}$.
	
\end{lemma}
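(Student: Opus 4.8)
The plan is to derive the statement from the polytope-approximation result of Lemma~2, upgrading the Hausdorff-closeness it provides to genuine set containment by a separating-hyperplane argument. I read ``strictly contained'' in the sense relevant to the intended application, where $\mathcal{K}_{in}$ is a shrunk set $\mathcal{M}^{\eta}$ with $\eta<1$ sitting inside $\mathcal{K}=\mathcal{M}$: namely that $\overline{\mathcal{K}_{in}}$ is a compact subset of $\mathrm{int}(\mathcal{K})$. Without loss of generality $\mathcal{K}_{in}$ is closed (a polytope enclosing $\overline{\mathcal{K}_{in}}$ also encloses $\mathcal{K}_{in}$) and $\mathcal{K}$ is full-dimensional (otherwise restrict everything to $\mathrm{aff}(\mathcal{K})$).

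First I would extract a quantitative cushion: since the compact set $\mathcal{K}_{in}$ is disjoint from the closed set $\mathbb{R}^d\setminus\mathrm{int}(\mathcal{K})$, the number $\delta:=\inf\{\,\|a-b\|:a\in\mathcal{K}_{in},\ b\in\mathbb{R}^d\setminus\mathrm{int}(\mathcal{K})\,\}$ is strictly positive, so the closed $\delta'$-neighbourhood of $\mathcal{K}_{in}$ is contained in $\mathcal{K}$ for every $\delta'<\delta$. Then I would apply Lemma~2 to obtain polytopes $\mathcal{P}_n\subseteq\mathcal{K}$ with $d(\mathcal{K},\mathcal{P}_n)\to 0$; since $\mathcal{P}_n\subseteq\mathcal{K}$, this distance equals $\sup_{a\in\mathcal{K}}d(a,\mathcal{P}_n)$. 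Fix $n$ with $d(\mathcal{K},\mathcal{P}_n)<\delta$ and set $\mathcal{P}:=\mathcal{P}_n$; then $\mathcal{P}\subseteq\mathcal{K}$ holds by construction, and only $\mathcal{K}_{in}\subseteq\mathcal{P}$ remains to be shown.

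For that inclusion I would argue by contradiction, exploiting that $\mathcal{P}$ is compact and convex. Suppose $x\in\mathcal{K}_{in}\setminus\mathcal{P}$. The separating-hyperplane theorem provides a unit vector $u$ and a scalar $c$ with $\langle u,y\rangle\le c$ for all $y\in\mathcal{P}$ and $\langle u,x\rangle=c+t$ for some $t>0$. Choosing $\delta'$ with $d(\mathcal{K},\mathcal{P})<\delta'<\delta$, the point $x':=x+\delta'u$ lies within $\delta'$ of $\mathcal{K}_{in}$, hence in $\mathcal{K}$, so (as $\mathcal{P}$ is compact) there is $y\in\mathcal{P}$ with $\|x'-y\|\le d(\mathcal{K},\mathcal{P})<\delta'$. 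But then $\langle u,y\rangle\ge\langle u,x'\rangle-\|x'-y\|>(c+t+\delta')-\delta'=c+t>c$, contradicting $y\in\mathcal{P}$. Hence every $x\in\mathcal{K}_{in}$ lies in $\mathcal{P}$, i.e.\ $\mathcal{K}_{in}\subseteq\mathcal{P}\subseteq\mathcal{K}$, which is the claim.

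The step I expect to be the real obstacle is this last one: Lemma~2 only guarantees that $\mathcal{P}_n$ is Hausdorff-close to $\mathcal{K}$, which on its own does not place any particular point of $\mathcal{K}_{in}$ inside $\mathcal{P}_n$; it is the combination of convexity with the cushion $\delta$ coming from strict containment that converts approximation into containment, and arranging the bookkeeping (with $\delta'$ sandwiched strictly between $d(\mathcal{K},\mathcal{P})$ and $\delta$) correctly is where care is needed. The reduction to the full-dimensional case, and the passage from $\mathcal{K}_{in}$ to its closure, are minor routine points.
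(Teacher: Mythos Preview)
Your proof is correct and follows essentially the same approach as the paper: define the positive cushion $\delta$ (the paper calls it $d_{min}$) between $\mathcal{K}_{in}$ and $\partial\mathcal{K}$, invoke the polytope-approximation lemma to get $\mathcal{P}\subseteq\mathcal{K}$ with Hausdorff distance below $\delta$, and conclude $\mathcal{K}_{in}\subseteq\mathcal{P}$. The paper simply asserts this last containment (``hence this polytope is necessarily in between these two convex sets''), whereas you supply a careful separating-hyperplane argument to justify it---so your write-up is in fact more complete.
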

\begin{proof}
	
	Let $d_{min}$ be the minimum distance between  the interior set and the boundary of the exterior one, that is
	$d_{min}=\inf_{x\in \mathcal{K}_{in}}d(x,\partial \mathcal{K})>0$. From the above lemma, there exists a polytope $\mathcal{P}$ that is contained by $\mathcal{K}$ and the maximal distance between any two points of $\mathcal{P}$ and $\mathcal{K}$ (Hausdorff distance) is strictly smaller than $d_{min}$, hence this polytope is necessarily in between these two convex sets.
\end{proof}

\begin{lemma} \label{family}
	Given any $\eta<1$, there exists a finite set $\{M_{a\vert x}\}$ that has a shrinking factor $\eta^*>\eta$.
\end{lemma}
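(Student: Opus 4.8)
The plan is to reduce the statement to a purely geometric fact — approximating a convex body from inside by a polytope — and then quote the two lemmas just established (the Bronstein-type approximation lemma and the strict-containment lemma immediately above).

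First I would fix a linear parametrization identifying every measurement with a vector in $\mathbb{R}^n$, and let $\mathcal{K}$ be the convex hull of the continuous set $\mathcal{M}$ (e.g.\ all projective qubit measurements); $\mathcal{K}$ is then a compact convex set, which I may take to be full-dimensional by passing to its affine hull. In this language the shrinking operation $M_a\mapsto M_a^\eta=\eta M_a+(1-\eta)\Tr[\xi_A M_a]\openone_d$ is the \emph{affine} map $\Phi_\eta=\eta\,\mathrm{Id}+(1-\eta)\Phi_0$, where $\Phi_0$ sends a POVM to the ``uniform'' POVM with effects $\Tr[\xi_A M_a]\openone_d$. Since $\Phi_0(\mathcal{K})\subseteq\mathcal{K}$ and $\eta<1$, the contracted body $\mathcal{K}^\eta:=\Phi_\eta(\mathcal{K})=\mathrm{conv}(\mathcal{M}^\eta)$ lies inside $\mathcal{K}$ — and, crucially, inside its interior (see the obstacle below). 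Hence it suffices to produce, for some $\eta<\eta'<1$, a \emph{finite} subset $\{M_{a\vert x}\}\subseteq\mathcal{M}$ whose convex hull contains $\mathcal{K}^{\eta'}$: then every $M_a^{\eta'}$ is a convex combination of the $M_{a\vert x}$, so the shrinking factor $\eta^*$ of $\{M_{a\vert x}\}$ (the supremum of $\eta$ for which $\mathcal{M}^\eta\subseteq\mathrm{conv}\{M_{a\vert x}\}$) is at least $\eta'>\eta$.

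Next I would carry out the construction. Pick $\eta'\in(\eta,1)$. The body $\mathcal{K}^{\eta'}$ is compact and strictly contained in $\mathcal{K}$, so the immediately preceding lemma (whose proof invokes the Bronstein approximation lemma) produces a polytope $\mathcal{P}$ with $\mathcal{K}^{\eta'}\subseteq\mathcal{P}\subseteq\mathcal{K}$. Each of the finitely many vertices of $\mathcal{P}$ lies in $\mathcal{K}=\mathrm{conv}(\mathcal{M})$, hence by Carath\'eodory's theorem is a convex combination of at most $n+1$ elements of $\mathcal{M}$; collecting all of these over all vertices yields a finite set $\{M_{a\vert x}\}\subseteq\mathcal{M}$ with $\mathrm{conv}\{M_{a\vert x}\}\supseteq\mathrm{vert}(\mathcal{P})$, hence $\supseteq\mathcal{P}\supseteq\mathcal{K}^{\eta'}$, which is exactly what was needed. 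For the qubit-projective case with $\xi_A=\openone_2/2$ — the case relevant to all applications — this is completely explicit: $\mathcal{K}$ is the Bloch ball, $\mathcal{M}^\eta$ is the Bloch ball of radius $\eta$, and one simply inscribes a polyhedron between the spheres of radii $\eta'$ and $1$ and (if desired) radially projects its vertices onto the unit sphere so they are genuine projective measurements.

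I expect the only real subtlety — the main obstacle — to be the claim that $\mathcal{K}^\eta$ lies in the \emph{interior} of $\mathcal{K}$, which is what licenses the strict-containment lemma; this amounts to $\Phi_0(\mathcal{K})\subseteq\mathrm{int}(\mathcal{K})$. It holds when $\xi_A$ has full rank \emph{and} one restricts $\mathcal{M}$ to non-degenerate measurements: a POVM with a zero effect for some outcome keeps a zero effect after shrinking, so it remains on $\partial\mathcal{K}$. But such trivial (partly-zero) measurements can be appended to any finite set ``for free'' — they cannot help violate any Bell or steering inequality — so they may simply be discarded from $\mathcal{M}$ when computing shrinking factors, after which $\Phi_0$ maps into the interior and the argument closes. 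As a bonus, the $d(\mathcal{K},\mathcal{P}_n)\lesssim n^{-2/(d-1)}$ rate in the Bronstein lemma upgrades this existence statement to a quantitative bound on the number of measurements needed to reach a prescribed $\eta$.
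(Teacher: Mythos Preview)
Your proposal is correct and follows essentially the same route as the paper: show that the shrunk set $\mathcal{M}^\eta$ lies strictly inside $\mathcal{K}$ (the paper argues this by specializing to $\xi_A=\openone/d$ and noting that the shrunk POVM elements are full-rank, you via the affine map $\Phi_\eta$ together with your explicit handling of the degenerate-measurement obstacle), and then invoke the preceding sandwiching lemma to produce the finite set. Your Carath\'eodory step to force $\{M_{a\vert x}\}\subseteq\mathcal{M}$ rather than merely $\subseteq\mathrm{conv}(\mathcal{M})$ is a small refinement the paper does not bother with, but the substance is identical.
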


\begin{proof}
	
	Since all quantum measurements can be decomposed as POVMs with rank-1 elements, it is sufficient to consider the convex hull of this set \cite{barrett02,haapasalo11}. Also, since extremal POVMs of dimension $d$ have at most $d^2$ outcomes, the convex hull of rank-1 measurements is compact for any fixed $d$ \cite{dariano05}.
	
	Define the set of noisy rank-1 measurements by applying the channel $M\mapsto\eta M + (1-\eta) \tr(M)\frac{I}{d}$ to each POVM element. For any $\eta<1$ all POVM elements are full-rank operators and hence strictly inside the set of rank-1 measurements \cite{dariano05}, which allows us to find a set $\{M_{a\vert x}\}$ with shrinking factor greater than $\eta$.
\end{proof}

\end{document}